\begin{document}


	\title[Hamiltonian Criterion] {Determination of the Hamiltonian from the Equations of Motion with Illustration from Examples}

\author{Chung-Ru Lee}
\affiliation{Department of Mathematics\\
National University of Singapore.}
\email{math.crlee@gmail.com}

\date{\today}
\counterwithout{equation}{section}

\begin{abstract}
In this paper, we study the determination of Hamiltonian from a given equations of motion. It can be cast into a problem of matrix factorization after reinterpretation of the system as first-order evolutionary equations in the phase space coordinates. We state the criterion on the evolution matrix for a Hamiltonian to exist. In addition, the proof is constructive and an explicit Hamiltonian with accompanied symplectic structure can be obtained. As an application, we will study a few classes of dynamical systems for illustration.
\end{abstract}
\date{\today}
\maketitle


\section{Introduction}
It is known that a damped oscillator
\begin{align}
\ddot{x}+\gamma\dot{x}+x=0\qquad(\gamma>0)\label{eq:damped}
\end{align}
appears to be a dissipative system and cannot be described as a Hamiltonian system. A common explanation for this is to observe that the total energy $\frac{1}{2}\dot{x}^2+\frac{1}{2}x^2$ is not conserved under time change.

Clearly the total energy (that is, the kinetic energy plus the potential) is non-conserved, as
$$\frac{d}{dt}(\frac{1}{2}\dot{x}^2+\frac{1}{2}x^2)=-\gamma\dot{x}^2<0$$
Decreases over time. However, it is still reasonable to suspect the existence of some other conserved quantities that could be treated as the Hamiltonian.

A formal way to prove that (\ref{eq:damped}) is non-Hamiltonian is to employ the symplectic framework \cite{arnold,da_silva,woodhouse}. Introduce the Hamiltonian vector field
\begin{align}\mathbf{X}_H=\frac{d}{dt}=\dot{p}\frac{\partial}{\partial p}+\dot{x}\frac{\partial}{\partial x}.\end{align}
Then it suffices to show that there does not exist an $H=H(\mathbf{p},\mathbf{x})$ so that the fundamental equation of Hamiltonian mechanics
\begin{align}
i_{\mathbf{X}_H}\omega=-dH\label{eq:fundamental}
\end{align}
holds, where $\omega$ is the canonical symplectic $2$-form $\omega=dp\wedge dx$.

Indeed, from equation (\ref{eq:damped}), we may put $\dot{x}=p$ and $\dot{p}=-\gamma p-x$. Thus, $i_{\mathbf{X}_H}\omega=-pdp-xdx-\gamma pdx$, and it is not a closed $1$-form.

Nevertheless, Bateman \cite{bateman} made the intelligent observation that a system of coupled oscillators with balanced energy loss-gain can be shown to be Hamiltonian (with a non-standard Hamiltonian form). The coupled equations of motion for the described system is:
\begin{align}
\begin{cases}
\ddot{x}+\gamma\dot{x}+x=-\lambda y\\
\ddot{y}-\gamma\dot{y}+y=-\lambda x.
\end{cases}\label{eq:bender1}
\end{align}
The parameter $\lambda$ here indicates the coupling strength.

Bender \textit{et al.} \cite{bender1,bender2} studied this system further and concluded that these equations can be derived using a non-standard quadratic Hamiltonian
$$H=pq+\frac{1}{2}\gamma(yq-xp)+(1-\frac{1}{4}\gamma^2)xy+\frac{1}{2}\lambda(x^2+y^2).$$
Note that in convention $p=\dot{x}$ is the momentum in $x$-direction and $q=\dot{y}$ is the momentum in $y$-direction. However, here one would observe that this is not the case. In fact, $q=\dot{x}+\frac{\gamma}{2}x$ and $p=\dot{y}-\frac{\gamma}{2}y$ when the canonical Hamiltonian equations are applied.

For convenience, we will also use $\mathbf{x}=(x_1,x_2)$ and $\mathbf{p}=(p_1,p_2)$ to denote $(x,y)$ and $(p,q)$ respectively. We will also write $\xi=(\mathbf{p},\mathbf{x})$ to denote the phase space coordinates.

Bender \textit{et al.} \cite{bender3} also noticed that from the system of equations (\ref{eq:bender1}),
$$pq+xy+\frac{\lambda}{2}(x^2+y^2)$$
is a conserved quantity under the conventional choice of momentum $\mathbf{p}=\dot{\mathbf{x}}$.

A subtle point here is that if one treats the above conserved quantity as the Hamiltonian of the system, one would not reach the correct equations of motion using the canonical Hamilton formulation.

This leads us to consider the question below: does there exist a suitably chosen symplectic structure so that (\ref{eq:fundamental}) with the Hamiltonian defined as $H:=pq+xy+\frac{\lambda}{2}(x^2+y^2)$ can produce the equations of motion
\begin{align}
\begin{cases}
\dot{p}=-\gamma p-x-\lambda y\\
\dot{q}=\gamma q-\lambda x-y\\
\dot{x}=p\\
\dot{y}=q
\end{cases}\label{eq:ad_hoc}
\end{align}
which is equivalent to (\ref{eq:bender1}). The answer to this question is positive: simply let
\begin{align}
\omega=dp\wedge dy+dq\wedge dx+\gamma dx\wedge dy,
\end{align}
from which the (nonzero) Poisson bracket follows:
$$\{p,q\}=-\gamma\qquad\{p,y\}=-1\qquad\{q,x\}=-1.$$

Another convenient way to state equation (\ref{eq:fundamental}) is
\begin{align}\label{eq:hamiltonian}
\frac{d}{dt}\xi_i=\{\xi_i,H\},
\end{align}
Here $\xi_i$ can be any entry of $\xi=\begin{pmatrix}
\mathbf{p}\\
\mathbf{x}
\end{pmatrix}$. Equation (\ref{eq:ad_hoc}) is then a natural consequence of (\ref{eq:hamiltonian}) using the Poisson brackets above.

We can see, a dynamical system has two pieces, namely the symplectic form and the Hamiltonian. As emphasized by Souriau, to reestablish the equations of motion, what we should do is by varying either (or both) parts of the dynamical system \cite{carinena}.

We would like to study the extent of this proposal. The purpose of this paper is two-fold. We would like to provide a criterion for identifying Hamiltonian systems when given some equations of motion. We also want our proposition to be constructive: that is, coming up with a recipe for generating a corresponding set $(H,\omega)$ of quadratic Hamiltonian and symplectic structure for such a system.

\section{Evolution Matrix and the Criterion}
From the description above, three immediate questions emerge:
\begin{enumerate}
\item How to decide whether a system is Hamiltonian, given its equations of motion?
\item Is there a recipe for construction of the pair $(H,\omega)$ for such a given system?
\item When the association to $(H,\omega)$ is non-unique, what is the relation between the different sets? In particular, does there exist canonical representatives?
\end{enumerate}
We shall provide an affirmative answer to all of these questions above.

For most physically interesting models, the Hamiltonian would be of the form
\begin{align}
H=H_0+V,
\end{align}
where $H_0$ is the quadratic part, and $V=V(\mathbf{x})$ is a function of only the position (that comes from the potential energy). Then, the part of $V$ would not jeopardize the symplectic structure and one can adjust the Hamiltonian to include the non-linear interactions in the Hamiltonian flow.

In light of that, we will consider only the quadratic part of the Hamiltonian and start with the assumption that $H=H_0$. Equivalently, the equations of motion we consider will be linear.

For a system of (linear second-order) equations of motion
\begin{align}
\ddot{\mathbf{x}}-B_1\dot{\mathbf{x}}-B_2\mathbf{x}=0\label{eq:motion}
\end{align}
we can always (but non-uniquely) reduce it to some first-order evolutionary equations. If we set the phase space coordinates to be $\xi=\begin{pmatrix}
\mathbf{p}\\ \mathbf{x}
\end{pmatrix}$, then this is to say that there exists some $M\in\mathrm{M}_{2n}(\mathbb{R})$ so that
$$\dot{\xi}=M\xi$$
is equivalent to (\ref{eq:motion}).

By a direct computation
$$M=2\underline{\omega}^{-1}\underline{H},$$
where $\underline{\omega}=\left(\omega(\frac{\partial}{\partial\xi_i},\frac{\partial}{\partial\xi_j})\right)_{ij}$ and $\underline{H}=\frac{1}{2}\left(\frac{\partial^2}{\partial\xi_j\partial\xi_i}H\right)_{ij}$. Those matrices constitute the symplectic structure and the Hamiltonian, respectively. Therefore, a system is Hamiltonian only when one can write $M$ as a product of alternating and symmetric matrices.

\begin{defn}
We say that a system with evolutionary equations $\dot{\xi}=M\xi$ (referred to as the system of $M$ in short) is \textit{admissible} if both $M$ and $M_{21}$ are invertible, where $M=\begin{pmatrix}
M_{11} & M_{12}\\
M_{21} & M_{22}
\end{pmatrix}$.
\end{defn}

The Main Theorems of this paper consists of three parts:
\begin{thm}\label{thm:main}
A system of first-order evolutionary equations
$$\dot{\xi}=M\xi$$
allows a Hamiltonian $H$ with accompanied symplectic structure $\omega$ if and only if $M\in\mathrm{GL}_{2n}(\mathbb{R})$ satisfies the criterion that $M$ and $-M$ are conjugate.
\end{thm}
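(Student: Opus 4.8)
The plan is to convert the existence of a compatible pair $(H,\omega)$ into a matrix factorization and then recognize that factorization as the stated conjugacy condition. The bridge is the identity recorded above: a quadratic $H$ with symplectic form $\omega$ produces $M=2\underline{\omega}^{-1}\underline{H}$ with $\underline{\omega}$ skew-symmetric invertible and $\underline{H}$ symmetric, and conversely any factorization $M=A^{-1}S$ with $A$ skew-symmetric invertible and $S$ symmetric yields a genuine $(H,\omega)$ (a constant $2$-form is automatically closed, and nondegenerate iff $A$ is invertible). Writing $S=AM$, such a factorization exists precisely when there is an invertible skew-symmetric $A$ with $AM$ symmetric, i.e. with $AM+M^{T}A=0$; equivalently, $M$ is an infinitesimally symplectic matrix for the form $A$, i.e. $M\in\mathfrak{sp}(A)$. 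So the theorem reduces to: such an $A$ exists $\iff$ $M\in\mathrm{GL}_{2n}(\mathbb{R})$ and $M$ is conjugate to $-M$.

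For the forward direction I would argue directly. Given $(H,\omega)$, set $A=\underline{\omega}$. Symmetry of $AM$ gives $AM=(AM)^{T}=M^{T}A^{T}=-M^{T}A$, hence $AMA^{-1}=-M^{T}$, so $M$ is conjugate to $-M^{T}$. Since every real matrix is conjugate to its transpose (rational canonical form), $-M^{T}\sim-M$, and therefore $M\sim-M$; and nondegeneracy of the system (so that $\underline{H}$, hence $M=2\underline{\omega}^{-1}\underline{H}$, is invertible) gives $M\in\mathrm{GL}_{2n}(\mathbb{R})$.

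The substance is the converse, which I would also make constructive. Assume $M\in\mathrm{GL}_{2n}(\mathbb{R})$ with $M\sim-M$, and put $M$ in a real canonical form, $M=PJP^{-1}$. The hypothesis $M\sim-M$ says the elementary divisors of $M$ are stable under $\lambda\mapsto-\lambda$, so the blocks of $J$ come in negation-matched families — to each block with eigenvalue $\lambda\ne 0$ there is a partner with $-\lambda$ and the same partition — and, crucially because $M$ is invertible, there is no block for the eigenvalue $0$. On each matched pair one exhibits an explicit invertible skew-symmetric $A_{0}$ intertwining the block and its transpose: writing the pair, up to conjugacy, as $\mathrm{diag}(N,-N^{T})$, take $A_{0}=\begin{pmatrix}0 & I\\ -I & 0\end{pmatrix}$, for which $A_{0}\,\mathrm{diag}(N,-N^{T})=\begin{pmatrix}0 & -N^{T}\\ -N & 0\end{pmatrix}$ is manifestly symmetric (the purely imaginary case $\lambda=-\bar\lambda$, where a single real block is already negation-symmetric, is handled in the same spirit on that block). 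Assembling these block-diagonally gives an invertible skew-symmetric $A_{0}$ with $A_{0}J+J^{T}A_{0}=0$; transporting, $A:=P^{-T}A_{0}P^{-1}$ is invertible, skew-symmetric, and satisfies $AM+M^{T}A=0$. Finally set $\underline{\omega}=A$ and $\underline{H}=\tfrac{1}{2}AM$: the latter is symmetric by construction and $M=2\underline{\omega}^{-1}\underline{H}$, which furnishes the Hamiltonian and symplectic form (and, together with the potential term $V(\mathbf{x})$, the full system).

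I expect the main obstacle to be the nondegeneracy of the constructed $A$. Producing a skew-symmetric $A$ with $AM+M^{T}A=0$ is cheap once $M\sim-M^{T}$, but keeping $A$ invertible forces one to use both that $M$ has no zero eigenvalue and the exact matching of Jordan data supplied by $M\sim-M$. The genuinely fiddly part is the block-by-block verification that the explicit $A_{0}$ can be chosen skew-symmetric and invertible in every real canonical block, in particular the blocks with purely imaginary spectrum.
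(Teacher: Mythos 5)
Your reduction of the statement to ``there exists an invertible skew-symmetric $A$ with $AM+M^{T}A=0$'' is the right bridge, and your forward direction is correct and complete (indeed you spell out $AMA^{-1}=-M^{T}\sim-M$, which the paper's written proof leaves implicit). Your converse takes a genuinely different route from the paper: you pass to the real Jordan form and pair negation-matched blocks as $\mathrm{diag}(N,-N^{T})$ against the standard skew form, whereas the paper uses the conjugation-equivariance of the factorization $M=AS$ to reduce to the rational canonical form --- whose invariant factors are even polynomials precisely because $M\sim-M$ and $M$ is invertible --- and then writes down, following Rodman, an explicit factorization $M_{k}=A_{k}S_{k}$ of each companion block, so that no case analysis by eigenvalue type ever appears.

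However, the step you defer is a genuine gap, not a routine variant of the paired case. When the real Jordan block $N$ of size $2m$ with eigenvalues $\pm i\beta$ occurs with odd multiplicity, one copy is left unpaired, and your $A_{0}=\bigl(\begin{smallmatrix}0&I\\-I&0\end{smallmatrix}\bigr)$ construction does not apply to it; what must be shown is that this single block admits an invertible skew-symmetric $A_{0}$ with $A_{0}N$ symmetric, i.e.\ that one conjugate pair of Jordan blocks at $\pm i\beta$ is realizable by a Hamiltonian matrix. That statement is true (it is part of the canonical-form theory for Hamiltonian matrices, where such blocks carry a sign characteristic; the $m=2$ case is the nonsemisimple $1{:}1$ resonance of the Hamiltonian Hopf bifurcation), but it needs its own construction and genuinely uses $\beta\neq 0$: the formally identical reasoning at eigenvalue $0$ is false, e.g.\ a nilpotent $M$ of Jordan type $(3,1)$ in dimension $4$ satisfies $M\sim-M$ yet admits no invertible skew $A$ with $AM$ symmetric, because odd-size nilpotent blocks of a Hamiltonian matrix must come in pairs. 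So ``the block is already negation-symmetric'' cannot by itself produce $A_{0}$; you must either exhibit the skew intertwiner for the purely imaginary block explicitly or invoke the classification you are tacitly relying on --- which is essentially the Rodman-type factorization the paper's proof is built on, and which sidesteps the case split by working with companion blocks of even polynomials. Until that block is handled, your converse is incomplete.
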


\begin{thm}\label{thm:form}
For any system $M=\begin{pmatrix}
M_{11} & M_{12}\\
M_{21} & M_{22}
\end{pmatrix}$ with $M_{21}\in\mathrm{GL}_n(\mathbb{R})$, one can find another system
\begin{align}
\dot{\xi}=M_\mathrm{std}\cdot\xi
\end{align}
so that the associated equations of motion are the same, and $$M_\mathrm{std}=\begin{pmatrix}
B_1 & B_2\\
\mathbbm{1}_n & \mathbf{0}_n
\end{pmatrix}$$
is conjugate to $M$.
\end{thm}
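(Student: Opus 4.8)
The plan is to make the coordinate change explicit: passing from a general admissible reduction to the ``standard'' one is just the substitution that trades the given momentum $\mathbf{p}$ for the velocity $\dot{\mathbf{x}}$, and this substitution is linear and invertible precisely because $M_{21}$ is invertible. First I would write $\dot\xi=M\xi$ in blocks, $\dot{\mathbf{p}}=M_{11}\mathbf{p}+M_{12}\mathbf{x}$ and $\dot{\mathbf{x}}=M_{21}\mathbf{p}+M_{22}\mathbf{x}$, solve the second relation for $\mathbf{p}=M_{21}^{-1}(\dot{\mathbf{x}}-M_{22}\mathbf{x})$, differentiate it, and substitute into the first relation to eliminate $\mathbf{p}$ entirely. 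This yields a bona fide second-order system $\ddot{\mathbf{x}}-B_1\dot{\mathbf{x}}-B_2\mathbf{x}=0$ with
\[
B_1=M_{22}+M_{21}M_{11}M_{21}^{-1},\qquad B_2=M_{21}M_{12}-M_{21}M_{11}M_{21}^{-1}M_{22},
\]
which I take to be ``the associated equations of motion.'' One should note that these are exactly the equations obtained by applying the canonical reduction $\mathbf{p}:=\dot{\mathbf{x}}$ to $M_\mathrm{std}$ itself, so the two systems share their equations of motion essentially by construction.

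The second step is to read the relation $\mathbf{p}=M_{21}^{-1}\dot{\mathbf{x}}-M_{21}^{-1}M_{22}\mathbf{x}$, together with $\mathbf{x}=\mathbf{x}$, as an invertible linear map $T\in\mathrm{GL}_{2n}(\mathbb{R})$ carrying the standard coordinates $(\dot{\mathbf{x}},\mathbf{x})$ to the original coordinates $(\mathbf{p},\mathbf{x})$, namely
\[
T=\begin{pmatrix} M_{21}^{-1} & -M_{21}^{-1}M_{22}\\ \mathbf{0}_n & \mathbbm{1}_n\end{pmatrix},\qquad T^{-1}=\begin{pmatrix} M_{21} & M_{22}\\ \mathbf{0}_n & \mathbbm{1}_n\end{pmatrix}.
\]
Since $T$ is block upper triangular with invertible diagonal blocks it lies in $\mathrm{GL}_{2n}(\mathbb{R})$, and $\xi=T\tilde\xi$ intertwines $\dot\xi=M\xi$ with $\dot{\tilde\xi}=(T^{-1}MT)\tilde\xi$. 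It then remains only to verify by a block multiplication that $T^{-1}MT$ has the asserted companion-type shape
\[
M_\mathrm{std}=\begin{pmatrix} B_1 & B_2\\ \mathbbm{1}_n & \mathbf{0}_n\end{pmatrix}
\]
with the $B_1,B_2$ displayed above; this is the routine computation that finishes the argument, and it simultaneously confirms that $M_\mathrm{std}$ is conjugate to $M$.

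There is no genuinely hard step here: the content is the observation that $M_{21}$ invertible is exactly the condition making the momentum-to-velocity substitution an honest change of phase-space variables. The only point that needs care is interpretive bookkeeping --- checking that the elimination of $\mathbf{p}$ in step one produces the very same $B_1,B_2$ that come out of $T^{-1}MT$ in step two, so that ``the associated equations of motion are the same'' is a verified statement rather than a circular one, and noting that $T$ depends only on $M_{21}$ and $M_{22}$, so that invertibility of $M$ itself is never needed in this theorem (in contrast to Theorem~\ref{thm:main}).
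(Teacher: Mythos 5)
Your proposal is correct and takes essentially the same route as the paper: a conjugation by a block upper-triangular matrix (an element of the group $P$ from Lemma \ref{lem:surgery}) bringing $M$ to the companion-type form, the only difference being that you perform it in one explicit step ($T^{-1}MT$ with $T$ built from $M_{21}^{-1}$ and $M_{22}$) where the paper composes two conjugations, first normalizing $M_{21}$ to $\mathbbm{1}_n$ and then clearing $M_{22}$. Your added verification that eliminating $\mathbf{p}$ reproduces exactly the $B_1,B_2$ of $T^{-1}MT$ makes explicit a point the paper defers to Lemma \ref{lem:surgery}, and your observation that only $M_{21}\in\mathrm{GL}_n(\mathbb{R})$ is needed matches the theorem's hypothesis.
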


\begin{thm}\label{thm:surgery}
Suppose
\begin{align}
M\xi=\begin{pmatrix}
M_{11} & M_{12}\\
M_{21} & M_{22}
\end{pmatrix}\begin{pmatrix}
\mathbf{p}\\ \mathbf{x}
\end{pmatrix},
\end{align}
and $M_{21}$ is invertible. If
\begin{align*}
B_1&=M_{21}M_{11}M_{21}^{-1}+M_{22}\\B_2&=M_{21}M_{12}-M_{21}M_{11}M_{21}^{-1}M_{22}.\end{align*}
satisfies
\begin{enumerate}
\item[$1.$] $B_1$ allows a decomposition $B_1=A_1S_1$ into a product of alternating and symmetric matrices and that
\item[$2.$] $S_1\in\mathrm{GL}_n(\mathbb{R})$ and $S_1B_2$ is symmetric.
\end{enumerate}
then the second-order equations of motion derived from the evolutionary equations allows a Hamiltonian $H_\mathrm{can}$ that is equipped with the canonical symplectic structure $\omega=\sum_{i}dp_i\wedge dx_i$.
\end{thm}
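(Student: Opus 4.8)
The plan is to combine Theorem~\ref{thm:form} with one further, carefully chosen change of the momentum variable, after which the identity $M=2\underline{\omega}^{-1}\underline{H}$ does the rest. By Theorem~\ref{thm:form} the system $\dot\xi=M\xi$ has the same second-order equations of motion as $\dot\xi=M_{\mathrm{std}}\xi$ with $M_{\mathrm{std}}=\begin{pmatrix}B_1&B_2\\\mathbbm{1}_n&\mathbf{0}_n\end{pmatrix}$; conjugating $M$ by $T_0=\begin{pmatrix}M_{21}&M_{22}\\\mathbf{0}_n&\mathbbm{1}_n\end{pmatrix}$ (the substitution $\mathbf{p}\mapsto M_{21}\mathbf{p}+M_{22}\mathbf{x}=\dot{\mathbf{x}}$) shows the blocks of $M_{\mathrm{std}}$ are exactly the $B_1,B_2$ in the statement. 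For the canonical form $\omega=\sum_i dp_i\wedge dx_i$ one has $\underline{\omega}=J:=\begin{pmatrix}\mathbf{0}_n&\mathbbm{1}_n\\-\mathbbm{1}_n&\mathbf{0}_n\end{pmatrix}$, so a system $\dot\xi=N\xi$ is Hamiltonian with the canonical structure precisely when $\underline{H}:=\tfrac12 JN$ is symmetric, i.e.\ when $JN$ is a symmetric matrix. Since $M_{\mathrm{std}}$ generally fails this, I look for an invertible $T$ that fixes $\mathbf{x}$ (so the $\mathbf{x}$-dynamics is untouched) and makes $J(TM_{\mathrm{std}}T^{-1})$ symmetric.

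Take $T=\begin{pmatrix}S_1&C\\\mathbf{0}_n&\mathbbm{1}_n\end{pmatrix}$, invertible because $S_1\in\mathrm{GL}_n(\mathbb{R})$, with $C$ to be chosen, and put $\tilde{M}:=TM_{\mathrm{std}}T^{-1}$, $\xi':=T\xi$. A block multiplication gives $\tilde{M}_{21}=S_1^{-1}$, $\tilde{M}_{22}=-S_1^{-1}C$, $\tilde{M}_{11}=(S_1B_1+C)S_1^{-1}$ and $\tilde{M}_{12}=-(S_1B_1+C)S_1^{-1}C+S_1B_2$, so the symmetry of $J\tilde{M}=\begin{pmatrix}\tilde{M}_{21}&\tilde{M}_{22}\\-\tilde{M}_{11}&-\tilde{M}_{12}\end{pmatrix}$ reduces to three requirements: (i)~$\tilde{M}_{21}$ symmetric, immediate since $S_1^{\top}=S_1$; (ii)~$\tilde{M}_{11}+\tilde{M}_{22}^{\top}=\mathbf{0}_n$, which after multiplying out reads $S_1B_1=C^{\top}-C$; and (iii)~$\tilde{M}_{12}$ symmetric. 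Requirement (ii) is the heart of the argument: by hypothesis~1, $S_1B_1=S_1A_1S_1$, and since $S_1^{\top}=S_1$ while $A_1^{\top}=-A_1$ this matrix is alternating, so $C^{\top}-C=S_1B_1$ is solvable — the natural choice is $C=-\tfrac12 S_1B_1$. With it, $S_1B_1+C=\tfrac12 S_1B_1$, so $\tilde{M}_{12}=\tfrac14 S_1B_1^2+S_1B_2$, and (iii) holds because $S_1B_1^2=S_1A_1S_1A_1S_1$ is symmetric (again using $S_1^{\top}=S_1$, $A_1^{\top}=-A_1$) while $S_1B_2$ is symmetric by hypothesis~2.

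Hence $\tilde{M}$ satisfies ``$J\tilde{M}$ symmetric'', so the quadratic Hamiltonian $H_{\mathrm{can}}(\xi'):=\tfrac12(\xi')^{\top}(J\tilde{M})\,\xi'$ together with $\omega=\sum_i dp_i\wedge dx_i$ reproduces $\dot\xi'=\tilde{M}\xi'$ through the fundamental equation $i_{\mathbf{X}_H}\omega=-dH$ with $H=H_{\mathrm{can}}$. Because $T$ and $T_0$ are block upper-triangular with $\mathbbm{1}_n$ in the lower-right corner, the position coordinate $\mathbf{x}$ is unchanged, so the second-order equations of motion associated with $\tilde{M}$ coincide with those of $M$; and $\tilde{M}_{21}=S_1^{-1}$ is invertible, so this is a genuine first-order reduction. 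I expect the only real obstacle to be the discovery of $T$ — equivalently, the recognition that hypotheses~1 and~2 are precisely the data needed to solve (ii) and (iii); once $C=-\tfrac12 S_1B_1$ is written down everything else is routine transpose arithmetic, the remaining trivial point being to record that Theorem~\ref{thm:form} delivers the very $B_1,B_2$ appearing in the hypotheses.
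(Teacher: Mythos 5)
Your proposal is correct and takes essentially the same route as the paper: reduce to $M_\mathrm{std}$ via Theorem~\ref{thm:form}, then exploit the antisymmetry of $S_1B_1=S_1A_1S_1$ (your $C=-\tfrac12 S_1B_1$ is exactly the paper's $X=\tfrac12 S_1B_1$) and the symmetry of $S_1B_2$, so that your $\tilde{M}$ coincides with the paper's $M_\mathrm{can}$ and your $H_\mathrm{can}$ with its quadratic form. The only difference is presentational: you exhibit the $P$-conjugating matrix and check the symmetry of $J\tilde{M}$ blockwise, whereas the paper passes through the intermediate factorization (\ref{eq:semi-canonical}) and leaves the conjugation implicit.
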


We provide detailed proofs and calculations in the next section. To recap: Theorem \ref{thm:main} states the criterion for which a system allows a Hamiltonian structure \cite{rodman}. The other two Theorems concern the problem of shaping the corresponding $(H,\omega)$ at will. Theorem \ref{thm:form} allow us to retain the conventional relation for the momentum $\mathbf{p}=\dot{\mathbf{x}}$ of the evolution matrix for a given admissible system. Theorem \ref{thm:surgery} is about a special but common case where one can choose the canonical symplectic structure.

From Theorem \ref{thm:main} we know that a system is Hamiltonian only if the characteristic polynomial is an even function. That is,
\begin{align}
\det(t\mathbbm{1}_{2n}-M)=\sum_{m=0}^{n}a_{2m}t^{2m}.
\end{align}
As a quick verification, we consider again the damped oscillator (\ref{eq:damped}). In this case
$$M=\begin{pmatrix}
-\gamma & -1\\
1 & 0
\end{pmatrix}$$
and so the characteristic polynomial is $t^2+\gamma t+1$, which can be Hamiltonian only when $\gamma=0$.

\section{Proofs of the Theorems}
We prove the three Theorems stated above.
\begin{proof}[Proof of Theorem \ref{thm:main}]
Suppose $M=AS$, where $A$ is alternating and $S$ is symmetric. Then any conjugate of $M$ would allow such decomposition. In fact, the decomposition $M=AS$ is compatible with $\mathrm{GL}_{2n}(\mathbb{R})$-conjugation in the sense that
$$\Lambda^{-1}M\Lambda=(\Lambda^{-1}A\Lambda^{-t})(\Lambda^{t}S\Lambda),$$
where $\Lambda^{-1}A\Lambda^{-t}$ (resp. $\Lambda^{t}S\Lambda$) is still antisymmetric (resp. symmetric).

Therefore, it suffices to construct the decomposition for one representative in the conjugacy class. In \cite{rodman}, such decomposition can be found for the rational canonical form of $M$. Let us describe the procedure here.

Suppose $M\in\mathrm{GL}_{2n}(\mathbb{R})$ is in the rational canonical form, namely it is by assumption a direct sum $M=\bigoplus_{k}M_k$ of matrices of the form
$$M_k^t=\begin{pmatrix}
0 & 1 & 0 & 0 & 0 & \dots & 0 & 0\\
0 & 0 & 1 & 0 & 0 & \dots & 0 & 0\\
\vdots & \vdots & \vdots & \vdots & \vdots & & \vdots & \vdots\\
0 & 0 & 0 & 0 & 0 & \dots & 0 & 1\\
a_0 & 0 & a_2 & 0 & a_4 & \dots & a_{2r_k-2} & 0
\end{pmatrix}$$
with $r_k\in\mathbb{Z}$. Note that we wrote the transpose of $M_k$ just to save space.

In this case $M_k$ has a decomposition $M_k=A_kS_k$, where
\begin{widetext}
\begin{align*}
A_k & =\begin{pmatrix}
0 & -1\\
1 & 0
\end{pmatrix}
\oplus\begin{pmatrix}
0 & -y_{r_k-2} & 0 & -y_{r_k-3} & \dots & -y_1 & 0 & -y_0\\
y_{r_k-2} & 0 & y_{r_k-3} & \reflectbox{$\ddots$} & \reflectbox{$\ddots$} & 0 & y_0 & \\
0 & -y_{r_k-3} & \reflectbox{$\ddots$} & \reflectbox{$\ddots$} & \reflectbox{$\ddots$} & -y_0 & & \\
\vdots & \reflectbox{$\ddots$} & \reflectbox{$\ddots$} & \reflectbox{$\ddots$} & \reflectbox{$\ddots$} & & & \\
y_0 & & & & & & &
\end{pmatrix},\\
S_k & =\begin{pmatrix}
1
\end{pmatrix}\oplus\begin{pmatrix}
& & & & & & & & -b_0\\
& & & & & & & b_0 & 0\\
& & & & & & -b_0 & 0 & b_1\\
& & & & & \reflectbox{$\ddots$} & 0 & -b_1 & 0\\
& & & & \reflectbox{$\ddots$} & 0 & b_1 & 0 & b_2\\
& & & \reflectbox{$\ddots$} & \reflectbox{$\ddots$} & \reflectbox{$\ddots$} & \reflectbox{$\ddots$} & \reflectbox{$\ddots$} & \vdots\\
-b_0 & 0 & b_1 & 0 & b_2 & 0 & \dots & \dots & b_{r_k-1}
\end{pmatrix}.
\end{align*}
\end{widetext}
Here $b_i$ and $y_i$ satisfies
\begin{align*}
b_0 & =a_0,\\
y_0 b_0 & = 1,
\end{align*}
in addition to
\begin{align*}
b_0y_m-\sum_{l=1}^mb_{l}y_{m-l} & =0,\\
\sum_{l=1}^mb_{l}y_{m-l} & =-a_{2r_k-2-2m}.
\end{align*}
Here the domain of $m$ is $m=1,2,\dots,r_k-2$ for the first equation, and an extra term of $m=r_k-1$ for the second (so there are a total of $2r_k-1$ variables and $2r_k-1$ equations). Note that from these equations the matrices $A_k$ and $S_k$ can be solved.
\end{proof}

\begin{proof}[Proof of Theorem \ref{thm:form}]
For any admissible
$$M=\begin{pmatrix}
M_{11} & M_{12}\\
M_{21} & M_{22}
\end{pmatrix},$$
adjust so that $M_{21}=\mathbbm{1}_n$ with a conjugation by $\begin{pmatrix}
M_{12} & \\ & \mathbbm{1}_n
\end{pmatrix}$. Then, $M_{22}=\mathbf{0}$ can be achieved through a $\begin{pmatrix}
\mathbbm{1}_n & M_{22}\\ & \mathbbm{1}_n
\end{pmatrix}$-conjugation.
\end{proof}

Before we prove Theorem \ref{thm:surgery}, let us state and proof a lemma:
\begin{lem}\label{lem:surgery}
Two admissible evolution systems $M$ and $W$ are in association to the same equations of motion if and only if they are equivalent under a $P$-conjugation, where $P$ is a subgroup of $\mathrm{GL}_{2n}(\mathbb{R})$ of the form
$$P=\left\{\begin{pmatrix}
T & X\\
& \mathbbm{1}_n
\end{pmatrix}\middle\vert T\in\mathrm{GL}_n(\mathbb{R}),X\in\mathrm{M}_n(\mathbb{R})\right\}.$$
\end{lem}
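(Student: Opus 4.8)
The plan is to pin down what ``the same equations of motion'' means, prove the implication $(\Leftarrow)$ by an explicit block computation, and deduce $(\Rightarrow)$ from Theorem~\ref{thm:form}. \emph{Setting up:} for $M=\begin{pmatrix}M_{11}&M_{12}\\ M_{21}&M_{22}\end{pmatrix}$ the system $\dot\xi=M\xi$ gives $\dot{\mathbf x}=M_{21}\mathbf p+M_{22}\mathbf x$, and admissibility lets us solve $\mathbf p=M_{21}^{-1}(\dot{\mathbf x}-M_{22}\mathbf x)$ and differentiate once more to obtain $\ddot{\mathbf x}=B_1\dot{\mathbf x}+B_2\mathbf x$ with $B_1,B_2$ the matrices written in Theorem~\ref{thm:surgery}. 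Since this reduction is canonical, ``$M$ and $W$ are associated to the same equations of motion'' means exactly that they produce the same ordered pair $(B_1,B_2)$, and this is the quantity I track throughout.

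\emph{Proof of $(\Leftarrow)$.} First note that $P$ is a subgroup of $\mathrm{GL}_{2n}(\mathbb R)$: the product $\begin{pmatrix}T_1&X_1\\&\mathbbm{1}_n\end{pmatrix}\begin{pmatrix}T_2&X_2\\&\mathbbm{1}_n\end{pmatrix}=\begin{pmatrix}T_1T_2&T_1X_2+X_1\\&\mathbbm{1}_n\end{pmatrix}$ and the inverses again have this shape. Now write $W=\Pi^{-1}M\Pi$ with $\Pi=\begin{pmatrix}T&X\\&\mathbbm{1}_n\end{pmatrix}\in P$. A block multiplication gives $W_{21}=M_{21}T$, $W_{22}=M_{22}+M_{21}X$, $W_{11}=T^{-1}(M_{11}-XM_{21})T$ and $W_{12}=T^{-1}\big(M_{12}+M_{11}X-XM_{21}X-XM_{22}\big)$; in particular $W_{21}$ is invertible, so $W$ is again admissible. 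Then $W_{21}W_{11}W_{21}^{-1}=M_{21}M_{11}M_{21}^{-1}-M_{21}X$, so the extra $-M_{21}X$ absorbs the $+M_{21}X$ in $W_{22}$ and $B_1^{W}=W_{21}W_{11}W_{21}^{-1}+W_{22}=B_1^{M}$; feeding the same expressions into $B_2^{W}=W_{21}W_{12}-W_{21}W_{11}W_{21}^{-1}W_{22}$, every $X$-dependent contribution cancels and $B_2^{W}=B_2^{M}$. Hence $P$-conjugate systems carry the same equations of motion.

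\emph{Proof of $(\Rightarrow)$.} Here I invoke Theorem~\ref{thm:form}: its proof takes an admissible $M$ to $M_{\mathrm{std}}=\begin{pmatrix}B_1&B_2\\ \mathbbm{1}_n&\mathbf 0_n\end{pmatrix}$ by conjugating first by a block-diagonal $\begin{pmatrix}\ast&0\\0&\mathbbm{1}_n\end{pmatrix}$ (well defined since $M_{21}$ is invertible) and then by a block-unipotent $\begin{pmatrix}\mathbbm{1}_n&\ast\\0&\mathbbm{1}_n\end{pmatrix}$ — and \emph{both of these lie in $P$}. Since $M_{\mathrm{std}}$ is literally a function of the pair $(B_1,B_2)$, any two admissible systems $M,W$ sharing the same equations of motion are $P$-conjugate to the \emph{same} matrix $M_{\mathrm{std}}$: say $\Pi_1^{-1}M\Pi_1=M_{\mathrm{std}}=\Pi_2^{-1}W\Pi_2$ with $\Pi_1,\Pi_2\in P$. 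Then $W=(\Pi_1\Pi_2^{-1})^{-1}M(\Pi_1\Pi_2^{-1})$ with $\Pi_1\Pi_2^{-1}\in P$ because $P$ is a group, which is the claim.

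\emph{The main obstacle.} The argument is conceptually short; the genuine bookkeeping is the cancellation showing $B_2$ is unchanged under a $P$-conjugation, and the one point that must be checked with care is that the standardizing conjugation of Theorem~\ref{thm:form} really has the block shape defining $P$ and is well defined — which is precisely where admissibility of $M$ (and $W$) is used.
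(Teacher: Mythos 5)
Your proof is correct and follows essentially the same route as the paper: you verify that the pair $(B_1,B_2)$ is invariant under $P$-conjugation (you just carry out explicitly the block computation the paper calls ``straightforward''), and you obtain the converse by noting that the standardizing conjugations of Theorem~\ref{thm:form} lie in $P$, so two systems with the same $(B_1,B_2)$ are both $P$-conjugate to the same $M_{\mathrm{std}}$ and hence to each other.
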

\begin{proof}
By a computation that is straightforward, we know that the system $\dot{\xi}=M\xi$ is associated to
\begin{align*}
\ddot{\mathbf{x}}-B_1\dot{\mathbf{x}}-B_2\mathbf{x}=0,
\end{align*}
where $B_1$ and $B_2$ are defined as in the Theorem statement. Moreover, $B_1$ and $B_2$ are invariant under $P$-conjugation.

This, along with Theorem \ref{thm:form}, proves that if $M$ and some $W$ are associated to the same set of equations of motion (\ref{eq:motion}), then they are both $P$-conjugate to
$$\begin{pmatrix}
B_1 & B_2\\
\mathbbm{1}_n & \mathbf{0}_n
\end{pmatrix}$$
and are therefore $P$-conjugates themselves.
\end{proof}
\begin{remark}\label{rmk:quotient}
Let $\mathcal{A}\subset\mathrm{GL}_{2n}(\mathbb{R})$ be the set of admissible matrices of dimension $2n\times 2n$. In simple words, Lemma \ref{lem:surgery} says that the map
\begin{align*}
\varphi:\mathcal{A} & \rightarrow\mathbb{R}^{n\times2n}\\
M & \mapsto(B_1,B_2)
\end{align*}
factors through the quotient map
$$\pi:\mathcal{A}\rightarrow\mathcal{A}/P.$$
Moreover, the pushforward map $\mathcal{A}/P\rightarrow\mathbb{R}^{n\times2n}$ is a bijection since the stabilizer of any such $M_\mathrm{std}$ in Theorem \ref{thm:form} intersects $P$ trivially.
\end{remark}
From the perspective of Remark \ref{rmk:quotient}, it should be natural to ask if there exists a section for the map $\varphi$. It is noted that Theorem \ref{thm:form} answered the question for us.

\begin{proof}[Proof of Theorem \ref{thm:surgery}]
By Theorem \ref{thm:form}, we may assume that $M$ is in the form
$$M=\begin{pmatrix}
B_1 & B_2\\
\mathbbm{1}_n & \mathbf{0}_n
\end{pmatrix}.$$

Assume the conditions mentioned in the Theorem statement. Then with $S_2:=-S_1B_2$, we have
\begin{align}
M=\begin{pmatrix}
B_1 & B_2\\ \mathbbm{1}_n & \mathbf{0}
\end{pmatrix}=\begin{pmatrix}
A_1 & -S_1^{-1}\\ S_1^{-1} & \mathbf{0}
\end{pmatrix}\begin{pmatrix}
S_1 & \\ & S_2
\end{pmatrix}.\label{eq:semi-canonical}
\end{align}
Furthermore, we can decompose additively $S_1A_1S_1=S_1B_1=X-X^t$ for some $X\in M_{n}(\mathbb{R})$ (for instance, one can let $X=\frac{1}{2}S_1B_1$). It follows that $M$ is $P$-conjugate to
$$\begin{pmatrix}
 & -\mathbbm{1}_n\\ \mathbbm{1}_n &
\end{pmatrix}\begin{pmatrix}
S_1^{-1} & S_1^{-1}X \\ X^tS_1^{-1} & X^tS_1^{-1}X+S_2
\end{pmatrix}=:M_\mathrm{can}.$$
The second factor in the left-hand side product defines $H_\mathrm{can}$.
\end{proof}

\section{A few illustrative examples}
\subsection{A dual to the system in (\ref{eq:bender1})}
We consider a dynamical system
\begin{align}
\begin{cases}
\ddot{x}+\gamma\dot{y}+x=-\lambda y\\
\ddot{y}-\gamma\dot{x}+y=-\lambda x,
\end{cases}\label{eq:new}
\end{align}
which can be written into the evolutionary equations with $M$ being
$$M=\begin{pmatrix}
0 & -\gamma & -1 & -\lambda\\
\gamma & 0 & -\lambda & -1\\
1 & 0 & 0 & 0\\
0 & 1 & 0 & 0
\end{pmatrix}.$$
Note that is satisfies both the criterion of Theorem \ref{thm:main}.
We yield the decomposition
$$M=\begin{pmatrix}
0 & -\gamma & -1 & 0\\
\gamma & 0 & 0 & -1\\
1 & 0 & 0 & 0\\
0 & 1 & 0 & 0
\end{pmatrix}\begin{pmatrix}
1 & 0 & 0 & 0\\
0 & 1 & 0 & 0\\
0 & 0 & 1 & \lambda\\
0 & 0 & \lambda & 1
\end{pmatrix}.$$
This is equivalent to saying that the system allows a Hamiltonian
$$H=\frac{1}{2}(p^2+q^2)+\frac{1}{2}(x^2+y^2)+\lambda xy$$
under the symplectic structure given by
$$\omega=dp\wedge dx+dq\wedge dy-\gamma dx\wedge dy.$$

By Theorem \ref{thm:surgery}, observe that the system (\ref{eq:new}) has a Hamiltonian $H_\mathrm{can}$:
$$\frac{1}{2}(p^2+q^2)+\frac{\gamma}{2}(qx-py)+\lambda xy+\frac{1}{2}(1+\frac{\gamma^2}{4})(x^2+y^2)$$
with the canonical symplectic $2$-form on the phase space (that derives the canonical Hamiltonian equations).

To apply the standard stability analysis of the evolution matrix, we find the characteristic polynomial of $M$:
$$\det(t\mathbbm{1}_{4}-M)=t^4+(2+\gamma^2)t^2+(1-\lambda^2),$$
which has zeros at
$$t^2=-\frac{1}{2}(2+\gamma^2\pm\sqrt{\gamma^4+4\gamma^2+4\lambda^2}).$$
Oscillatory behavior occurs when $t^2<0$, which happens when $\lambda<1$. In contrast to Bender's case, the critical value for the existence of PT-symmetric phase in this model is independent of $\gamma$.

\subsection{An interactive pair of PT-symmetric systems}
Consider a system with two cross-coupled PT-symmetric pairs of (\ref{eq:bender1}), equipped with possibly distinct energy loss-gain coefficient. Here, the parity alternation means to switch $x_i$ and $y_i$, for both $i=1,2$:
\begin{align}
\begin{cases}
\ddot{x}_1+\gamma_1\dot{x}_1+x_1=-\lambda_1 x_2-\lambda_2 y_2\\
\ddot{y}_1-\gamma_1\dot{y}_1+y_1=-\lambda_2 x_2-\lambda_1 y_2\\
\ddot{x}_2+\gamma_2\dot{x}_2+x_2=-\lambda_1 x_1-\lambda_2 y_1\\
\ddot{y}_2-\gamma_2\dot{y}_2+y_2=-\lambda_2 x_1-\lambda_1 y_1,
\end{cases}\label{eq:interaction}
\end{align}
It also satisfies the criterion of Theorem \ref{thm:main} and we may check that (\ref{eq:interaction}) comes from the pair $(H,\omega)$ with
\begin{align*}
H =&\sum_{i=1}^2(p_iq_i+x_iy_i)+\\
&\lambda_1(x_1y_2+x_2y_1)+\lambda_2(x_1x_2+y_1y_2)\\
\omega =&\sum_{i=1}^2(dp_i\wedge dy_i+dq_i\wedge dx_i)+\\
&\gamma_1dx_1\wedge dy_1+\gamma_2dx_2\wedge dy_2.
\end{align*}
Moreover, we can also write down the Hamiltonian $H_\mathrm{can}$ that is accompanied by the canonical symplectic $2$-form:
\begin{align*}
&\sum_{i=1}^2\Big(p_iq_i+(1-\frac{1}{4}\gamma_i^2)(x_iy_i)-\frac{1}{2}\gamma_i(p_ix_i-q_iy_i)\Big)+\\
&\lambda_1(x_1y_2+y_1x_2)+\lambda_2(x_1x_2+y_1y_2).
\end{align*}
The general discussion toward the characteristics of this system is a complex task. However, one can observe qualitatively that the oscillatory solutions occur only when the damping is light relative to the coupling constants. Moreover, when the coupling is weak, each subsystem $(x_1,y_1)$ and $(x_2,y_2)$ should behave almost independently. In this case the behaviour depends mainly on $\gamma_1$ and $\gamma_2$.

\subsection{Higher degree terms: the H\'{e}non-Heiles system}
As an example for dealing with cases with higher degree terms, consider the H\'{e}non-Heiles system with balanced energy loss-gain
\begin{align}
\begin{cases}
\ddot{x}+\gamma\dot{y}+x=x^2-y^2\\
\ddot{y}-\gamma\dot{x}+y=-2xy,
\end{cases}
\end{align}
with the recognition from (\ref{eq:new}), we try to write it as an Hamiltonian system with
$$H=\frac{1}{2}(p^2+q^2)+\frac{1}{2}(x^2+y^2)+V(x,y).$$
along with the symplectic structure defined by
$$\omega=dp\wedge dx+dq\wedge dy-\gamma dx\wedge dy.$$
Now, re-expressing in evolutionary equations,
\begin{align}
\begin{cases}
\dot{p}=-\gamma q-x+x^2-y^2\\
\dot{q}=\gamma p-y-2xy\\
\dot{x}=p\\
\dot{y}=q
\end{cases}\label{eq:higher}
\end{align}
We can calculate the potential term by the fundamental equation (\ref{eq:hamiltonian}). It satisfies
$$\begin{cases}
-\frac{\partial}{\partial x}V=x^2-y^2\\
-\frac{\partial}{\partial y}V=-2xy
\end{cases}$$
Solving to get $V=y^2x-\frac{1}{3}x^3$ is then straightforward.

One can tune the parameter $\gamma$ to control the chaotic behavior of the system.

\section{A Remark: Relation to Lagrangian Mechanics}
Despite its original root in Hamiltonian mechanics, the work in this paper can be adopted in the setting of Lagrangian mechanics to provide a straightforward algorithm for the construction of Lagrangian from the equations of motion once the pair $(H,\omega)$ is found. In fact, the Lagrangian $L$ can be yielded from
\begin{align}
L=\theta(\mathbf{X}_H)-H,\label{eq:lagrangian}
\end{align}
where $\theta$ is a $1$-form satisfying $d\theta=\omega$. Note the formula we apply above is different from the usual Legendre transformation. We write it so to incorporate the possible case of non-canonical symplectic $2$-forms.

To elaborate, let us consider a second-order differential equation
\begin{align}
\ddot{\mathbf{x}}=B_1{\dot{\mathbf{x}}}+B_2\mathbf{x},\label{eq:our_equation}
\end{align}
where $B_1$ and $B_2$ are constant matrices satisfying the conditions of Theorem \ref{thm:surgery}. Note that the system allows trivially a standard $M_\text{std}$ and therefore $\mathbf{p}=\dot{\mathbf{x}}$ under such basis. If we apply the decomposition from (\ref{eq:semi-canonical}) to the equation (\ref{eq:lagrangian}), we have
$$\underline{\omega}^{-1}=\begin{pmatrix}
\mathbf{0} & S_1\\
-S_1 & S_1A_1S_1
\end{pmatrix}$$
and therefore
$$\theta(\mathbf{X}_H)=\dot{\mathbf{x}}^t S_1\dot{\mathbf{x}}+\mathbf{x}^t S_1A_1S_1\dot{\mathbf{x}}.$$
On the other hand,
$$H=\frac{1}{2}(\dot{\mathbf{x}}^t S_1\dot{\mathbf{x}}-\mathbf{x}^t S_2\mathbf{x})$$
is straightforward. To conclude, subtract and see
$$L=\frac{1}{2}\left(\dot{\mathbf{x}}^tS_1\dot{\mathbf{x}}+\mathbf{x}^tS_1B_1\dot{\mathbf{x}}+\mathbf{x}^tS_1B_2\mathbf{x}\right).$$

Of course, the choice of the Lagrangian $L$ here would be up to any a full time-derivative. As an example, for our model system (\ref{eq:new}), the Lagrangian can be chosen to be
\begin{align*}
L & =\frac{1}{2}(\dot{x}^2+\dot{y}^2)-\frac{1}{2}\gamma(x\dot{y}+y\dot{x})-\frac{1}{2}(x^2+y^2+2\lambda xy)
\end{align*}
or equivalently,
\begin{align*}
L&=\frac{1}{2}(\dot{x}^2+\dot{y}^2)-\frac{1}{2}(x^2+y^2)-\gamma x\dot{y}-\lambda xy
\end{align*}
by the formula (\ref{eq:lagrangian}).

\section*{Acknowledgments}
The author wishes to express sincere gratitude to Professor Chin-Rong Lee for his invaluable suggestions on the manuscript and for several enlightening conversations that significantly enhanced this work. The author is also deeply grateful to the anonymous referees for their numerous insightful comments and constructive suggestions, which have greatly strengthened the content of this paper.

\nocite{*}

\bibliography{refs}{}

\end{document}